\documentclass[10pt,conference]{IEEEtran}

\usepackage{subcaption}

\DeclareCaptionFont{eightpt}{\fontsize{8}{10}\selectfont}

\usepackage{pgfplots}
\pgfplotsset{compat=1.18}
\usepackage{microtype}
\usepackage{braket}
\newcommand{\ketbra}[2]{\ket{#1}\!\bra{#2}}
\newcommand{\codeParam}[1]{(\!(#1)\!)}
\newcommand{\stabParam}[1]{[\![#1]\!]}
\usepackage{makecell}
\usepackage{cellspace}
\usepackage{tikz}
\pgfplotsset{compat=1.18} \usepackage{amsmath,amssymb,amsthm}
\usepackage{etoolbox}

\usepackage{graphicx}
\usepackage{url}

\theoremstyle{definition}
\newtheorem{definition}{Definition}[section]
\newtheorem{theorem}[definition]{Theorem}

\newtheorem{algorithm}[definition]{Algorithm}
\newtheorem{example}[definition]{Example}
\usepackage{booktabs}
\usepackage{siunitx} \usepackage{multirow}
\usepackage{lineno}
\usepackage{pdfpages}

\usepackage{cite}
\usepackage{float}

\usepackage[thm]{algorithm}
\usepackage{algpseudocode}
\makeatletter
\let\c@algorithm\c@definition 
\makeatother

\usepackage{placeins}
\makeatletter
\newenvironment{protocol}[1][htb]{
    \renewcommand{\ALG@name}{Protocol}
    \begin{algorithm}[#1]
    }{\end{algorithm}}
\makeatother
\usepackage{rotating}
\usepackage{ifpdf}
\ifpdf
  \usepackage{hyperref}
\fi

\DeclareMathOperator{\Tr}{Tr}

\newcommand{\EE}{\mathcal{E}}

\newcommand{\HH}{\mathcal{H}}

\makeatletter

\newcounter{ALGsubline}[ALG@line]

\newcommand{\SubState}{\stepcounter{ALGsubline}{\footnotesize
\Statex\theALG@line.\arabic{ALGsubline}:}\hspace{\algorithmicindent}\ignorespaces
}

\newenvironment{subnumbers}{        \setcounter{ALGsubline}{0}
    \let\State\SubState }{        }

\makeatother

\title{Quantum Anonymous Secret Sharing with \\Permutation Invariant Codes}

\author{\IEEEauthorblockN{Varin Sikand}
\IEEEauthorblockA{\textit{Department of Computer Science} \\
\textit{The University of Texas at Dallas}\\
Richardson, TX, USA \\
varin.sikand@utdallas.edu}
\and
\IEEEauthorblockN{Andrew Nemec}
\IEEEauthorblockA{\textit{Department of Computer Science} \\
\textit{The University of Texas at Dallas}\\
Richardson, TX, USA \\
andrew.nemec@utdallas.edu}
}

\begin{document}

\maketitle

\begin{abstract}
Quantum secret sharing schemes are a family of quantum cryptographic protocols which provide secure quantum encodings, mapping one secret to multiple shares of information such that the original secret cannot be accessed without an authorized set of shares present for decoding.
In this work, we describe a protocol that enables sender-anonymity during the secret decoding process.
By using permutation-invariant QEC codes along with a set of anonymous quantum transmission algorithms, we construct a quantum anonymous secret sharing scheme that achieves sender-anonymity. 
We quantify information leakage in ramp quantum secret sharing schemes via the quantum conditional min-entropy, justifying it as a valid measure of leaked information by relating it to the Knill-Laflamme quantum error correction conditions.
Finally, we evaluate several permutation-invariant codes using this measure to make observations on the information leakage of intermediate shares for each quantum anonymous secret sharing scheme.
\end{abstract}

\section{Introduction}

Classical secret sharing (CSS) \cite{shamir1979share,blakley1979safeguarding} is a cryptographic primitive which allows for the division of a classical secret into shares distributed among some number of parties in such a way that certain subsets of parties can work together to recover the secret, while other subsets have no information about the secret.
This helps eliminate single points of failure, both in terms of potentially untrustworthy participants as well as the potential loss of a share. Applications range from facilitating secure multiparty computing protocols~\cite{CramerSecureMPC} to ensuring some minimal consensus when making critical organization-level decisions~\cite{barker2016nist}.

Quantum computing promises exciting new ways of processing classical and quantum information.
Shor's algorithm~\cite{Shor_1997} and its implications for public key cryptosystems have dominated the cryptography discussion, leading to the development of post-quantum cryptography~\cite{915326}.
While post-quantum protocols are primarily classical schemes that protect against quantum adversaries, a number of classical cryptographic protocols such as secret sharing have been adapted into the quantum paradigm.
Together with some more novel quantum schemes such as quantum key distribution~\cite{bennett1984,ekert}, these protocols make up quantum cryptography: protocols that make use of quantum properties such as superposition and entanglement.

Quantum secret sharing (QSS) was initially introduced by Hillery et al.~\cite{Hillery_1999} by translating CSS schemes to a quantum setting.
This has led to the development of QSS schemes for classical secrets~\cite{gottesman2000theory, Karlsson_QSS, Hillery_1999, Miyajima_2022} and quantum secrets~\cite{gottesman2000theory, hagiwara2020fourqubitscodequantumdeletion, Shibayama_2021}.
Recent advancements in hardware have also lead to the experimental realizations of these schemes~\cite{Chiwaki_2025, masumori2024advancesharingogawaet}.

However, while these schemes keep the secret secure from any unauthorized participants, they do not also allow for the identity of the participants to remain a secret.
Similar to applications such as anonymous voting, where participants require anonymity in order to participate honestly and without bias, some applications of secret sharing may demand the anonymity of those participants who recover the secret.
This need for anonymity has motivated the ongoing development of classical anonymous secret sharing schemes \cite{BishopFASS, AnonShamir2025, BLUNDO199713}, and similarly motivates the need for an anonymous secret sharing scheme within the quantum paradigm. 

In this work, we will utilize the class of permutation-invariant (PI) quantum error-correcting (QEC) codes, which have codespaces invariant under reorderings of the physical qubits, allowing for decoding processes that depend only on the number of erasures that have occurred and not specific subsets of erased qubits.
Using these PI codes, we present a QSS protocol that allows all participants of a recovering set to remain completely anonymous throughout and after the secret recovery step.
Notably, this protocol works for any PI code, allowing for easy construction of anonymous QSS schemes with varying numbers of participating shareholders. Similar recent works in quantum anonymous secret sharing (QASS) schemes lack this ability to tolerate missing shareholders, limiting their uses to those requiring full consensus as opposed to a threshold \cite{LI2024109836,WangNovel2024,Yang2025}. Additionally, these works do not keep shareholder identities secret from the decoder and therefore do not maintain sender-anonymity between the decoder and the participating shareholders.
We use sub-protocols from Christandl and Wehner~\cite{Christandl_2005} for anonymous quantum transmissions between a participant and the recover, obscuring which parties participate in the recovery from both external adversaries and other participating parties including the decoder. 

Furthermore, we present the quantum conditional min-entropy~\cite{Konig_2009} of a bipartite reference-experimental state as a sound metric for quantifying leakage of information in the event that an adversary gains access to a certain number of shares.
We justify the validity of this metric by connecting it to the Knill-Laflamme quantum error correction conditions~\cite{Knill_2000}, establishing it as a single-shot quantity measuring the amount of information lost between an attempted recovery and an optimal recovery.
This metric can be further applied to the class of ramp QSS schemes, which leak information about the secret for some sets of parties that are neither authorized not unauthorized.
We compare the conditional min-entropy of several PI codes, allowing us to observe information leakage in our QASS schemes.

\section{Background}

\subsection{Secret Sharing} \label{sec:sec-share}

Within the field of cryptography, there are many methods of keeping a secret secure, aiming to protect any information of the secret from adversaries. Encryption schemes, for example, attempt to protect from adversaries by storing the secret in a state that is unintelligible to those not authorized. However, what about when an authorized user later becomes malicious, either via adversary control over the user or due to the user's own intentions changing? In such cases, where no single user can be fully trusted, secret sharing schemes present a solution by not allowing any single user to decode the secret on their own. CSS schemes~\cite{shamir1979share,blakley1979safeguarding} perform this encoding using classical bits as secrets and shares with a wide range of established applications requiring consensus approval~\cite{CramerSecureMPC,barker2016nist}.

A QSS scheme is a cryptographic protocol which aims to encode a single secret into several shares stored as a collection of entangled qubits. These qubits can then be distributed among multiple parties such that the participation of some number of the shares is required to recover the original secret. These secrets can be either classical~\cite{gottesman2000theory, Karlsson_QSS, Hillery_1999, Miyajima_2022}, encoded as a mixture of basis states, or quantum~\cite{gottesman2000theory, hagiwara2020fourqubitscodequantumdeletion, Shibayama_2021}, preserving superposition. 

Some QSS schemes utilize multipartite entangled states as shares \cite{LI2024109836, Hillery_1999, e24101433, Karlsson_QSS}, however many are limited to encoding classical secrets \cite{Hillery_1999,Karlsson_QSS,e24101433} whereas others such as Guo-Dong et al. \cite{LI2024109836} require all shareholders to yield their shares in order to decode. In general, these forms of quantum secret sharing in general are very fragile to noise and decoherence, with much of the modern work in the field regarding these QSS schemes aiming to address this fragility \cite{LI2024109836,Yang2025,WangNoiseQSS}. 

It is also possible to make QSS schemes anonymous \cite{LI2024109836,WangNovel2024,Yang2025}. Current QASS schemes are generally restricted to GHZ or W states which tolerate either no or single qubit losses before information is lost, limiting the access structure. Additionally, these schemes distribute $m$ shares among $n$ participants but require all shareholders to yield their shares in order to decode the original secret, making it sub-optimal for partial consensus use cases. Yang et al. \cite{Yang2025} proposes a more robust QASS scheme that allows $n-m-2$ participants to be missing, however it still assumes that all $m$ participants who receive shares participate in recovery. Finally, all of these QASS schemes do not keep the senders' identities anonymous from the decoder due to the total shareholder participation constraint.

In this paper, we focus on QSS schemes that leverage QEC codes, specifically modeling the schemes as quantum codes under erasure channels, to provide more robust secret sharing protocols \cite{gottesman2000theory}. Unlike entanglement-based schemes, QEC-based approaches allow for secret recovery even with some number of missing shares while allowing a vast number of QEC codes to be interpreted as QSS schemes. 

In a QEC-based quantum secret sharing scheme, the secret is encoded using a QEC code where the post-error encoded states correspond to authorized sets within the scheme's access structure. This scheme allows for some share-holders to remain absent from the secret recovery process while still recovering full information of the secret. A common access structure defines authorized sets using a threshold $k$ such that any set of shares $S$ is authorized if and only if $|S| \geq k$ and sets of size $|S| < k$ obtain no information about the secret. This means, for these \textit{threshold schemes}, the threshold of such a secret sharing scheme naturally aligns with the erasure-correction capability of the code. That is, if $k$ shares are required to reconstruct the secret then the code must be able to correct $n-k$ erasures, where $n$ is the total number of shares \cite{gottesman2000theory}.

QSS schemes can be specified further as either \textit{perfect schemes} or \textit{ramp schemes} based on how they handle information leakage. In a perfect scheme, all share sets are distinctly either authorized or unauthorized. Clearly, these sorts of schemes are ideal in terms of the security it provides, however the share size for a perfect QSS scheme is constrained to be at least as large as the secret itself \cite{gottesman2000theory}. This restriction leads to a significant increase in the size of each share in general, and thereby less efficient storage and transportation of each individual share. To circumvent this, Ogawa et al. \cite{Ogawa_2005} proposed ramp secret sharing schemes which allow smaller individual share sizes at the cost of \textit{intermediate} share sets which leak information about the secret. 

We will focus on ramp QSS schemes for this paper in order to define a usable metric for information leaked by intermediate sets. Therefore, the threshold of a ramp secret sharing scheme can be viewed as a ``fully authorized'' set of qubits, or the set of qubits such that perfect information can be retrieved. Using this paradigm, we can model an adversarial party as one who gains access to some number of shares, and the conditional min-entropy defined in Section \ref{subsec:cond-min-entropy} can be seen as a measure of how much information is ``leaked'' to this adversary given their intermediate share set.

Certain CSS schemes leak less information than QSS schemes for the same number of shares held by an adversary. \textit{Hybrid} secret sharing schemes \cite{PhysRevA.64.042311, PhysRevA.71.012328} allow ramp QSS schemes to be paired with well-chosen CSS schemes, ``lifting'' the overall security in exchange for a linear increase in the number of shares per shareholder. By applying one of the four single-qubit Paulis uniformly at random to the quantum secret before encoding it, the dealer can then use two classical bits as secrets to identify which Pauli was applied. This effectively encodes the quantum secret using two classical secrets, and the adversary must then be able to first decode the classical secret in order to gain information about the quantum secret~\cite{892142,PhysRevA.67.042317}. This guarantees the security of the QSS scheme to be at least as good as the security of the CSS scheme for any number of shares held by the adversary.

\subsection{Quantum Error Correction}

Intuitively, a QEC code is defined as a subspace $\mathcal C$ of a Hilbert space $\mathcal{H}_p$ that can correct a set of error (Kraus) operators $\mathcal{E} = \{E_i\}$ via some recovery superoperator $\mathcal{R}$:
\[(\mathcal{R}\mathcal{E})\circ(\rho) \propto \rho.\]

Knill and Laflamme~\cite{Knill_2000} gave several equivalent conditions that define a QEC code, three of which we provide below:

\begin{definition}[Knill-Laflamme conditions]\label{def:Knill-Laflamme}\hfill
\begin{enumerate}
    \item A code $\mathcal{C}$ corrects a set of errors $\mathcal{E}$ if and only if, for all basis elements $\ket{i}, \ket{j} \in \mathcal C$ and error operators $E_a,E_b \in \mathcal{E}$:
    \begin{equation}\label{eq:knill-laflamme}
        \bra{i}E_a^\dagger E_b\ket{j} = \delta_{ij}\alpha_{ab},
    \end{equation}
    where $\alpha_{ab} \in \mathbb{C}$ is a constant dependent on only the two error operators.

    \item Let $R,E$ be identical and maximally entangled systems encoded into codespace $\mathcal{C}$ such that their joint system is
    \[\sum_{i} \ket{i}_{R}\ket{i}_{E}.\]
    Then, $\mathcal{R}$ is a perfect recovery channel for the error superoperator $\mathcal{E}$ on $\mathcal{C}$ if and only if 
    \begin{equation}\label{eq:Knill-Laflamme-2}
        I \otimes (\mathcal{RE}) \sum_{i}\ket{i}_{R}\ket{i}_{E} = \lambda\sum_{i} \ket{i}_{R}\ket{i}_{E},
    \end{equation}
    where $\lambda \in \mathbb{C}$.
    \item Given initially equivalent systems $R, E$ encoded in codespace $\mathcal{C}$, let the experimental state undergo the error channel $\mathcal{E}$ while the reference system is left untouched, resulting in systems $R, E'$. The codespace $\mathcal{C}$ represents an error-correcting code for $\mathcal{E}$ if and only if
     \begin{equation}\label{eq:knill-laflamme-3}
         I(\rho_{E'}:\rho_R) - I(\rho_E:\rho_R) = 0.
     \end{equation}
\end{enumerate}
\end{definition}
Throughout the paper, we will refer to these conditions as the first, second, and third Knill-Laflamme conditions respectively.

Quantum codes encode a logical state in $\HH_K$ into $n$ physical qubits and are defined by a minimum distance $d$, such that all errors affecting up to $d-1$ physical qubits can be detected by the code. We call such a code an $\codeParam{n,K,d}$ quantum code. This code can additionally correct all errors affecting up to $t = \lfloor \frac{d-1}{2}\rfloor$ qubits. The Shor code~\cite{PhysRevA.52.R2493}, for example, is a $\codeParam{9,2,3}$ code, meaning that it encodes one logical qubit into nine physical ones, and can detect up to two general errors. We denote them as ``general'' errors because, by default, we do not know the location of these errors.

While there are many possible error channels, complete depolarization (or ``loss'') of a qubit is a common type considered in the field of QEC. A complete depolarization can be modeled in one of two ways:
\begin{enumerate}
    \item Tracing out the lost qubit from the rest of the system
    \item Performing a completely depolarizing operation on the qubit, mapping it to the maximally mixed state
\end{enumerate}

The second operation can defined in terms of Kraus operators:
\begin{equation}\label{eq:depolar}
    \Phi(\rho) =  \frac{1}{4}\sum_{K \in P}K\rho K^\dagger = \frac{I}{2},
\end{equation}
where the set of Kraus operators is chosen to be the single-qubit Pauli set $P$ and $\rho$ represents the density matrix of the qubit. The depolarization of multiple qubits can be seen as a series of these channels, one per qubit.

By leveraging the discretization of errors, complete loss of information on a set of qubits can be viewed as $t$ Pauli errors to be corrected after measurement. Therefore, depolarization of $t$ qubits can be corrected for by an $\codeParam{n,K,d}$ code. However, if we are aware of the location of the depolarization, we can correct a larger set of errors. We define such an error paired with knowledge of its position as a \textit{quantum erasure}. In this case, any QEC code with distance $d$ can be considered a quantum erasure code that corrects $d-1$ erasures.

\subsection{Quantum Anonymous Transmission Protocols}
\label{sec:anon-tran}
In order to define a QASS protocol, we first define anonymity along with several prerequisite anonymous protocols for sending and receiving information. We assume the adversary attempting to obtain information regarding the shareholders' identity is unbounded and can \textit{corrupt} a subset of players, controlling all of their actions (a player being either a shareholder or the decoder). A player may separately choose to not adhere to the protocol, denoted in this case as \textit{malicious}~\cite{Christandl_2005}. Finally, we define an \textit{honest} player as a player that is neither corrupted nor malicious.

We now define a protocol as \textit{anonymous} when the protocol does not change the a priori uncertainty of any sender’s identity given an adversary who can corrupt up to $n-2$ players and gains access to the randomness of all corrupted players (randomness meaning the set of all random strings generated during the protocol for corrupted players). \textit{Tracelessness} is a stronger constraint, additionally asserting that if, after the protocol is completed, the adversary gains access to all randomness used by players during the protocol, the probability of determining the sender or receiver’s identity is unchanged.

Transmitting quantum information is commonly done via quantum teleportation or a quantum channel, however both would break anonymity (either via entanglement or a channel being set between two parties)~\cite{BennetTeleportation}. 
Therefore, to maintain anonymity during this process, we incorporate into our QSS scheme the following anonymous and traceless protocols for transmitting both classical and quantum information introduced by Christandl and Wehner~\cite{Christandl_2005}  :
\begin{enumerate}
    \item Creating entanglement between two parties among $n$ total parties anonymously (\textit{AE}, Protocol \ref{prot:AE}). 

    \textit{Input:} One $n$-qubit GHZ state $\frac{1}{\sqrt2}(\ket{0^n} + \ket{1^n})$.
    
    \textit{Output:} Entangled bell pair $\frac{1}{\sqrt{2}}(\ket{00} + \ket{11})$ between sender and receiver.
    \item Broadcasting a classical bit $d$ anonymously (\textit{ANON(d)}), Protocol \ref{prot:ANON}).

    \textit{Input:} One $n$-qubit GHZ state $\frac{1}{\sqrt2}(\ket{0^n} + \ket{1^n})$.
    
    \textit{Output:} One classical bit $d$ sent from sender to receiver.
    
    \item Sending quantum information anonymously via quantum teleportation (\textit{ANONQ($\ket{\phi}$)}, Protocol \ref{prot:ANONQ}).

    \textit{Input:} Three $n$-qubit GHZ states $\frac{1}{\sqrt2}(\ket{0^n} + \ket{1^n})$.
    
    \textit{Output:} One qubit $\ket{\phi}$ sent from sender to receiver
    
    \item For simultaneous send requests (either entanglement or \textit{ANON}), we use the collision detection protocol defined in \cite{Christandl_2005} and a reservation map strategy described in \cite{Waidner1990}. Uses $\lceil \log n\rceil + 1$ $n$-qubit GHZ states.
\end{enumerate}

\begin{protocol}[H]
\caption{\textit{AE}}
\label{prot:AE}
    Prerequisite: Shared state $\frac{1}{\sqrt{2}}(\ket{0^n} + \ket{1^n})$
\begin{algorithmic}[1]
    \State Alice and Bob do nothing to their qubits, while the other $n-2$ players (let $V$ be the set of all players) apply a Hadamard to their qubit
    \State These other players then measure their qubit and broadcast the result $m_j$
    \State Alice picks a random bit $b \in_R \{0,1\}$ and broadcasts $b$
    \State Alice applies a phase flip $Z$ to her qubit if $b=1$
    \State Bob applies a phase flip $Z$ to his qubit if $b\,\oplus \bigoplus_{j\in V\backslash\{Alice,Bob\}} m_j = 1$
\end{algorithmic}
\end{protocol}\vspace{-1em}

\begin{protocol}[H]
\caption{\textit{ANON(d)}}
\label{prot:ANON}
    Prerequisite: Shared state $\frac{1}{\sqrt{2}}(\ket{0^n} + \ket{1^n})$
\begin{algorithmic}[1]
    \State Alice applies a phase flip $Z$ to her part of the state if $d$ = 1, otherwise nothing
    \State Each player (including Alice):
    \begin{algorithmic}[1]
        \State Applies a Hadamard transform to their qubit
        \State Measures their qubit in the computational basis
        \State Broadcasts the result of that measurement
        \State Counts the total number of 1’s in the measurement outcomes. Let this be $k$
        \State If $k$ is even, then $d = 0$, otherwise $d = 1$
    \end{algorithmic}
\end{algorithmic}
\end{protocol}\vspace{-1em}

\begin{protocol}[H]
\caption{\textit{ANONQ($\ket{\phi}$)}}
\label{prot:ANONQ}
Prerequisite: 3 shared states $\frac{1}{\sqrt{2}}(\ket{0^n} + \ket{1^n})$. Alice has qubit $\ket{\phi}$ as her share to send.
    \begin{algorithmic}[1]
        \State The players run \textit{AE} as described above such that Alice and Bob then share an EPR pair
        \State Alice uses the quantum teleportation circuit with input $\ket{\phi}$ (the qubit she wishes to send) and the EPR pair, and obtains measurement outcomes $m_0$ and $m_1$.
        \State All players run \textit{ANON($m_0$)} and \textit{ANON($m_1$)} with Alice as the sender
        \State Bob applies the transformation described by $m_0,m_1$ on his part of the EPR pair and obtains $\ket\phi$
    \end{algorithmic}
\end{protocol}\vspace{-1em}

\section{Quantifying Information Leakage}

\subsection{Quantum Mutual Information}
For QEC codes, the information leakage, i.e., the information inaccessible to the receiver, is typically framed in terms of quantum mutual information between an experimental state undergoing the error and an untouched reference state.
This is based on the third Knill-Laflamme condition (Eq. \ref{eq:knill-laflamme-3}), which states that the mutual information between such experimental/reference states must be equal both before and after the error for perfect recovery. Equivalently, Schumacher and Nielsen \cite{Schumacher_1996} define this condition for perfect error correction in terms of \textit{coherent information} $I_e$, stating that the coherent information after the error channel $\EE$ must equal the entropy pre-error,
\begin{equation*}
    I_e(\rho_E, \mathcal{E}) := S(\rho_{E'}) - S(\rho_{E'R}) = S(\rho_E),
\end{equation*}
where the coherent information is dependent on only $\rho_E$ and the error channel $\mathcal{E}$, as $E'$ can be define in terms of both.

Quantum mutual information is therefore effective in determining if a state is fully recoverable (and as shown in Ogawa et al. \cite{Ogawa_2005}, fully vanishing), but in general it is less useful in determining how much information can be recovered in an imperfect recovery scenario.
Additionally, mutual information is not a ``single-shot'' measure, as the entropies used in its calculation are an average over the density matrix's spectrum.
In other words, mutual information represents information gathered from multiple copies of the state and averages over them.
While this makes it useful in the context of quantum error correction, it does not accurately represent an adversary attempting to retrieve information about a secret encoded in a single instance of a QSS scheme, as they would only have access to one copy of the shares.

\subsection{Conditional Min-Entropy} \label{subsec:cond-min-entropy}
The first Knill-Laflamme condition is useful when determining if a codespace is an error-correcting code that can correct a specific set of errors.
In our case, however, we wish to establish a quantity that describes how ``recovered'' a post-error state is after applying an optimal recovery channel, which the second Knill-Laflamme condition (Eq. \ref{eq:Knill-Laflamme-2}) provides.
The condition states that if the experimental system $E$ remains fully entangled with the reference system $R$ after undergoing an error and a recovery process, then $\mathcal{R}$ represents a valid recovery superoperator for the error set $\mathcal{E}$ and codespace $\mathcal{C}$. This is particularly useful when considering imperfect recovery processes, where not all the information can be recovered.

If a fully recovered state is one that remains fully entangled with its reference system, then we can measure the fidelity between our combined reference-experimental state $\rho_{RE}$ post error and recovery with the maximally entangled state $\ket{\Phi_{RE}} =\frac{1}{\sqrt{d_R}}\sum_{i}\ket{i}_{R}\ket{i}_{E}$, where $d_R=\dim(R)$.
As stated in the condition, if this fidelity is $1$, then the recovery operation has fully recovered all information, however when a system has lost all information, it can still utilize the maximally mixed state $I/d^2_R$ and obtain an fidelity of $1/d_R^2$.
This value, therefore, indicates that the recovery operation was entirely unable to recover the combined systems to an entangled state since the experimental system is now no better than a maximally mixed state. 

To measure this fidelity, Konig et al. \cite{Konig_2009} defined the conditional min-entropy $H_{min}(R|E)$ as a measure of the maximum achievable overlap between two systems $R$ and $E$, under the assumption that only local operations are performed on $E$.
This mirrors the setup surrounding the second Knill-Laflamme condition, with $R$ representing the reference system and $E$ representing the experimental system post-error.
This ``single-shot'' measurement also models the optimal recovery operation even when the operation itself is unknown, as $H_{min}$ is defined over all possible local operations on subsystem $E$.
The fidelity between two density operators $\rho$ and $\sigma$ is defined as
\begin{equation} \label{eq:fidelity}
    F(\rho,\sigma) = \left(\text{Tr}\left(\sqrt{\sqrt\rho\sigma\sqrt\rho}\right)\right)^2,
\end{equation}
where $\sqrt\rho$ is well-defined due to $\rho$ being positive semi-definite, and $\sqrt\rho\sigma\sqrt\rho$ is also positive semi-definite by construction.
Observing additionally that the spectrum of a product of matrices is invariant under cyclic permutation, we recover
\begin{equation*}
    \text{Tr}\left(\sqrt{\sqrt\rho\sigma\sqrt\rho}\right) = \sum_{\lambda}\sqrt{\lambda} = \text{Tr}(\sqrt{\rho\sigma}),
\end{equation*}
where $\{\lambda\}$ is the set of eigenvalues of both $\sqrt\rho\sigma\sqrt\rho$ and $\rho\sigma$.
We can now define the \textit{quantum correlation}
\footnote{Strictly speaking, the operation over $\mathcal{R}$ is a $\sup$ operation. However, by restricting ourselves to finite-dimensional Hilbert spaces, the optimizations are being taken over compact sets and is equivalent to a $\max$ operation.}
as in \cite{Konig_2009}:
\begin{equation}\label{eq:q_corr}
    q_{corr} = d_R\max_\mathcal{R}F((I_R \otimes \mathcal{R})(\rho_{RE}), \ket{\Phi_{RE}}\bra{\Phi_{RE}}).
\end{equation}

Following \cite{Konig_2009}, we define $H_{min}$ in terms of $q_{corr}$:
\begin{equation*}
    H_{min}(R|E)_\rho = -\log q_{corr}(R|E)_\rho.
\end{equation*}
Because $E$ represents our experimental system post-error, the maximization in Eq. (\ref{eq:q_corr}) returns the maximum achievable fidelity between our experimental system and the untouched reference system.
Thus, $q_{corr}(R|E)$ quantifies how ``recoverable'' a state $E$ is by measuring its maximum fidelity with a perfectly recovered state $R$, and similarly $H_{min}(R|E)$ represents how much information is missing from the state.

The conditional min-entropy establishes the maximum fidelity between the post-recovery experimental state and the reference state, measuring how much information was leaked to the environment and therefore unrecoverable.
Crucially, this single-shot measure can be applied to any QEC code (and therefore ramp QSS scheme).
By maximizing over all possible recovery channels, we are assuming that the adversary gains access to some optimal recovery channel even with intermediate shares.
Therefore, $q_{corr}$ represents how similar the adversary's recovered state is to the perfectly recovered state under an optimal recovery, and $H_{min}$ represents the amount of information that is missing in their recovered secret compared to the reference secret. Notably, by taking the difference between the value of $H_{min}$ and its minimum possible value ($-\log K$) for a given code, the resulting quantity is proportional to the amount of information lost after erasure and recovery. In fact, Konig et al. \cite{Konig_2009} labels the conditional min entropy as the maximum achievable singlet fraction, inversely proportional to the ``similarity'' of the recovered state to the maximally entangled state.

As such, we use $H_{min}$ to quantify the amount of information leaked to the adversary as a value ranging between $[-\log K,\log K]$ that is proportional to this information leakage. As opposed to the Von Neumann conditional entropy (or mutual information), this quantity is single-shot and minimized, thereby representing the adversary's best case attempt at decoding the secret. The conditional min-entropy has similarly been used to quantify the security of quantum cryptographic protocols, including quantum key distribution~\cite{renner2008security}.

\subsection{Calculating $H_{min}$ for Stabilizer Codes}

Stabilizer codes represent a widely used class of QEC codes, thereby also functioning as common QSS schemes. We denote a stabilizer code with distance $d$ encoding $k$ logical qubits into $n$ physical qubits as an $\stabParam{n,k,d}$ code, which also makes it an $\codeParam{n,2^k,d}$ code.
Notably, stabilizers already have a measure quantifying the amount of information retrievable from intermediate sets that are neither fully authorized or fully vanishing given by the cleaning lemma for stabilizer codes \cite{Bravyi_2009}.
The cleaning lemma quantifies the number of logical Pauli operators that act nontrivially only on the erased partition of qubits ($M^c$) when multiplied by a stabilizer element $s \in S$, where $S$ is the code's stabilizer group, denoting such an operator as being ``cleaned off'' of $M$.
It follows then that the remaining Pauli operators must be ``cleaned off'' of the partition's complement set $M^c$ instead, and we denote the operators that are cleaned off $M^c$ ($M$) as being ``supported'' on $M$ ($M^c$ respectively).
Let $\bar{p} \in G_M$ if it is supported on $M$ and $\bar{p} \in G_{M^c}$ if it is instead supported on $M^c$.
Then, the logical information recoverable from each partition is
    \begin{equation} \label{eq:cleaning}
        \rho_M = \frac{1}{|G_{M^c}|}\sum_{E\in G_{M^c}}\!\!E\rho E^\dagger, \,\,\, \rho_{M^c} = \frac{1}{|G_M|}\sum_{F\in {G_M}}\!\!F\rho F^\dagger,
    \end{equation}
respectively, where $\rho$ is the original logical state \cite{Gheorghiu_2012}.
This makes it clear that the information recoverable using the adversary's shares $\rho_M$ vanishes with a greater number of operators that can be cleaned off of them. For any stabilizer code with encoding isometry $V: \mathbb{C}^{2^k} \rightarrow \mathbb{C}^{2^n}$, $G_M$ can be defined in terms of the normalizer and stabilizer groups as
\[G_M = V(N_M(S)/ S_M) V^{\dagger},\]
where $N_M(S)$ is the set of normalizer elements that are supported on $M$, and $S_M$  are the set of stabilizer elements that are supported on $M$. This can be observed from the fact that the logical operators of the code are defined as $N(S) / S$, and therefore $G_M$ is the subgroup of these operators that are also supported on $M$. We include the decoding isometry $V^{\dagger}$ to denote that the operators in $G_M$ act on the logical subspace.

While it is possible to calculate the conditional min-entropy by solving the optimization problem in Eq. \ref{eq:q_corr}, in general this is not tractable for larger codes.
Finding the optimal recovery channel scales exponentially with the number of qubits, causing the runtime to quickly become infeasible.
While there is not currently a general solution for this exponential runtime for all types of codes, stabilizer codes can make use of certain maximum likelihood (ML) decoders that represent their optimal recovery channels.
ML decoders are generally NP-complete to compute \cite{Hsieh_2011, DecodingHardnessIEEE}, but in the case of the erasure channel, they can be computed in polynomial time in terms of the erasure weight and number of stabilizers \cite{Harris_2018}.
In such cases, Eq. \ref{eq:cleaning} can be leveraged to obtain the following reduced form of the maximum fidelity:

\begin{theorem}[$H_{min}$ for Stabilizer Codes\footnote{This result previously appeared in~\cite{9174340}, although in that paper it was studied in terms of the maximum fidelity and not in terms of the conditional min-entropy.}]\label{th:stab_fidelity} \hfill\par
Let $\mathcal{C}$ be an $\stabParam{n,k,d}$ stabilizer code with orthonormal basis states $\ket{i}$. If $M^c \subseteq E$ is the set of erased qubits, the remaining state after optimal recovery channel $\mathcal{R}$ is $\rho_M = (I_R \otimes \mathcal{R}) (\Tr_{M^c}(\ketbra{\Phi_{RE}}{\Phi_{RE}}))$, where $\ket{\Phi_{RE}} = \frac{1}{\sqrt{K}}\sum_{i}\ket{i}_R\ket{i}_E$.
Then, the maximum fidelity is
\begin{equation*}
    F_{max}(\ket{\Phi_{RE}}, \rho_M) = \frac{1}{|G_{M^c}|},
\end{equation*}
and the corresponding $H_{min}$ value is
\begin{equation} \label{eq:HminStab}
    H_{min}(R | E) = -\log \frac{2^k}{|G_{M^c}|} = \log|G_{M^c}| - k.
\end{equation}
\end{theorem}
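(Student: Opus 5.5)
The plan is to reduce the optimization in Eq.~(\ref{eq:q_corr}) to a computation on the logical space, using the cleaning-lemma form of the reduced state, Eq.~(\ref{eq:cleaning}).

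First, the structure of the erased code. Since the erased qubits $M^c$ are traced out, a maximally entangled input $\ket{\Phi_{RE}}$ with $K=2^k$ becomes a state on $R\otimes M$. By the cleaning lemma, there is a decoding unitary $U_M$ acting only on $M$ (built from the stabilizer/normalizer structure, as in the erasure ML decoder of \cite{Harris_2018}) that maps the state on $M$ to the logical register plus junk. After this decoding,
\[
(I_R\otimes U_M)\,\rho_{RM}\,(I_R\otimes U_M)^\dagger = \Big(\frac{1}{|G_{M^c}|}\sum_{P\in G_{M^c}} (I\otimes P)\ketbra{\Phi}{\Phi}(I\otimes P)^\dagger\Big)\otimes \sigma_{\text{junk}},
\]
where $\sigma_{\text{junk}}$ is uncorrelated with $R$. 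This is the entanglement-assisted version of Eq.~(\ref{eq:cleaning}). To justify it I will take a symplectic basis of logical Paulis adapted to the partition: logical pairs supported on $M$, logical pairs supported on $M^c$, and ``mixed'' pairs where one member is cleanable to each side. I then argue that tracing out $M^c$ fully dephases by the group $G_{M^c}$, with the reduced density matrix written as a sum over Paulis with support on $M$.

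Second, the fidelity. The Bell-diagonal mixture has, for distinct logical Paulis, orthogonal states $(I\otimes P)\ket{\Phi}$. Achievability is immediate: apply $U_M$ and discard the junk, giving fidelity $1/|G_{M^c}|$, since only $P=I$ overlaps with $\ket{\Phi}$.

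For optimality I would use the standard dual characterization, $q_{corr}=d_R\max_{\mathcal R}F$. Any channel on $M$ commutes with the twirl by $G_{M^c}$ on the reference side, because $(P\otimes I)$ on $R$ can be transferred via $(I\otimes P^T)$ to $E$. The state is therefore invariant under the Pauli group $G_{M^c}$ acting on $R$ alone (up to transpose). Averaging the overlap $\bra{\Phi}(I\otimes\mathcal R)(\rho)\ket{\Phi}$ over this invariance, and using that $\frac{1}{|G|}\sum_{P\in G}(P\otimes I)\ketbra{\Phi}{\Phi}(P\otimes I)^\dagger$ has largest eigenvalue $1/|G|$, gives $F\le 1/|G_{M^c}|$. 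A cleaner route is to note that $\rho_{RM}$ is invariant under $P_R\otimes I$ for $P\in G_{M^c}$ combined with the stabilizer action, then apply this averaging bound.

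Third, plug in: $q_{corr}=2^k/|G_{M^c}|$, so $H_{min}=\log|G_{M^c}|-k$.

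The main obstacle is the first step: rigorously establishing the decomposition and the invariance of $\rho_{RM}$ under $G_{M^c}$ applied on $R$. This requires showing that each logical operator cleanable off $M$ can be implemented on $M^c$, so that conjugating $R$ by its transpose-partner leaves the marginal on $RM$ unchanged. I expect to use that $\bar P$ supported on $M^c$ satisfies $(P^T_R\otimes \bar P_{M^c})\ket{\Phi}=\ket{\Phi}$, and that $\bar P_{M^c}$ disappears under $\Tr_{M^c}$. This invariance alone suffices for the upper bound, so I would attempt it first and rely on the cited ML decoder only for achievability.
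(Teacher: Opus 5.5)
Your proposal is correct, but it takes a different and more complete route than the paper on the key step.

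\textbf{What the paper does.} Its proof is a direct evaluation. It takes the ML erasure decoder to be the optimal recovery and uses Eq.~(\ref{eq:cleaning}) on the reference-extended state, so the recovered state is the $G_{M^c}$-twirl of $\ketbra{\Phi_{RE}}{\Phi_{RE}}$. It then computes $\bra{\Phi_{RE}}\rho_M\ket{\Phi_{RE}}=\frac{1}{|G_{M^c}|}\sum_{F}|\Tr(F)/2^k|^2=1/|G_{M^c}|$. It never argues that this decoder is optimal; that is asserted by appeal to the cited ML-decoder literature. Your achievability step is exactly this computation.

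\textbf{What you add.} You supply a genuine converse. Each $\bar P\in G_{M^c}$ has a representative on $M^c$, and $(P^T\otimes\bar P)\ket{\Phi_{RE}}=\ket{\Phi_{RE}}$. Hence $\rho_{RM}$ is invariant under conjugation by $P^T\otimes I_M$, and this invariance survives any channel on $M$. The overlap with $\ket{\Phi_{RE}}$ is therefore at most the operator norm of the twirled Bell projector. That norm is $1/|G_{M^c}|$, because the vectors $(P\otimes I)\ket{\Phi_{RE}}$ for distinct $P$ are orthonormal. This is what actually justifies the $\max$ in Eq.~(\ref{eq:q_corr}). It is also elementary: it needs only the definition of $G_{M^c}$ and the transpose trick, not the cleaning lemma or the decoder.

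\textbf{Where you do more than needed.} Your Step~1, the explicit symplectic decomposition plus junk, is heavier than necessary. Eq.~(\ref{eq:cleaning}) is an identity of channels: $\mathcal R(\Tr_{M^c}(V\rho V^\dagger))=\frac{1}{|G_{M^c}|}\sum_{F\in G_{M^c}}F\rho F^\dagger$ for every $\rho$. It therefore extends to the reference-extended state by linearity, which is all the paper uses.

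If you keep the decomposition, phrase the ``mixed'' pairs carefully. One member is supportable on \emph{both} $M$ and $M^c$, and its partner on neither. Two anticommuting logicals cannot sit on disjoint supports.
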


\begin{proof}
First, we consider the expression for fidelity in the case of one of the states being pure. This can be seen by letting $\rho$ be pure in Eq. \ref{eq:fidelity} such that $\rho = \ketbra{\psi_\rho}{\psi_\rho}$ and noting therefore that $\sqrt{\rho} = \rho$. Applying this to Eq. \ref{eq:fidelity},
\begin{align*}
    F(\rho,\sigma) &= \braket{\psi_\rho|\sigma|\psi_\rho}.
\end{align*}
Because $\ket{\Phi_{RE}}$ is a pure state in the calculation of $q_{corr}$ (Eq.~\ref{eq:q_corr}), we can apply this formulation of the fidelity to reveal
\[F(\ket{\Phi_{RE}}, \rho_M) = \braket{\Phi_{RE} | \rho_M | \Phi_{RE}},\]
where $\rho_M$ is the logical combined state $\rho = \ketbra{\Phi_{RE}}{\Phi_{RE}}$ after encoding, erasure, and recovery.
Now, by applying the result of Eq. \ref{eq:cleaning}, we can write $\rho_M$ in terms of $\rho$:
\begin{align*}
    \rho_M &= \frac{1}{|G_{M^c}|}\sum_{F\in G_{M^c}} (I_R \otimes F)\rho (I_R \otimes F)^\dagger, \\
    &= \frac{1}{|G_{M^c}|}\sum_{F\in G_{M^c}} (I_R \otimes F)\ketbra{\Phi_{RE}}{\Phi_{RE}} (I_R \otimes F)^\dagger.
\end{align*}
We can then write the fidelity as
\begin{align*}
    F(\ket{\Phi_{RE}}, \rho_M) &= \frac{1}{|G_{M^c}|}\sum_{F\in G_{M^c}}|\braket{\Phi_{RE} | I_R \otimes F | \Phi_{RE}}|^2.
\end{align*}
Because $\ket{\Phi_{RE}} = \frac{1}{\sqrt{2^k}}\sum_{i}\ket{i}_R\ket{i}_E$ represents an orthonormal basis for $\mathbb{C}^{2^k}$, the inner products in the summation reduce to $\frac{1}{2^k}\Tr(F)$, causing all summation terms to vanish except for the case where $F = I_E$.
Therefore, the expression for fidelity of a stabilizer code reduces to
\[F(\ket{\Phi_{RE}}, \rho_M) = \frac{1}{|G_{M^c}|}.\]
We can now find both $q_{corr}$ and subsequently the expression for $H_{min}(R|E)$ as
\begin{align*}
        H_{min}(R|E) &= -\log \frac{2^k}{|G_{M^c}|} = \log|G_{M^c}| - k. \qedhere
\end{align*}
\end{proof}

To show this explicitly, we will analyze a 4-qubit code and calculate its conditional min-entropy to showcase the operational meaning of the quantity for QSS schemes.
\begin{example}
    We present the encoded logical Paulis and stabilizer generators for the $\stabParam{4,1,2}$ Leung-Nielsen-Chuang-Yamamoto code \cite{Leung_1997} below:
\begin{align*}
    \bar{X} = XXII, \,\, \bar{Z} = ZIZI, \,\, S = \left\langle XXXX, ZZII, IIZZ \right\rangle.
\end{align*}
This code has certain known properties about its information leakage. 
Due to the distance, we would expect no loss of information up to a single erasure, but we also observe that the code completely loses all information with $n-d+1 = 3$ or more qubits erased. This is because complete loss of information on one partition implies the information is stored in the complementary partition. In the case where two qubits are lost, we can first observe the logical operators in $G_{M^c}$ assuming qubits 1 and 2 are erased:
\[G_{M^c} = \{X, I\}.\]
These are two out of the four possible logical operators, and inserting into Eq. \ref{eq:HminStab} gives 
\[H_{min}(R|E) = \log2 - 1 = 0,\]
denoting that half of the information has been lost. Stabilizer codes are subject to different permutations of the same weight erasure, but in this case the $H_{min}$ values (and therefore amount of information lost) does not change. For example, if we instead erase qubits 2 and 4, then $G_{M^c} = \{Z,I\}$ and we retrieve the same $H_{min}$ value of 0.

Table \ref{tab:4_codes} reflects our assumptions, with complete information leakage with more than three shares, and no information leakage with one or fewer shares. In all cases, the value of $F$ that corresponds to complete information loss is $0.25$. As explained in Section \ref{subsec:cond-min-entropy}, this matches the expected maximum fidelity in the case of complete loss of information is $1/{d_R^2}$, where $d_R = 2$. Notably, the value of $H_{min}$ with two shares indicates that the adversary's attempt at recovery obtained half the information of the true secret, matching the result intuited from the cleaning lemma. 

We can also directly observe our result given in Theorem \ref{th:stab_fidelity} as, for authorized share sets $M$, $G_{M^c}$ contains only the identity, and we therefore obtain a fidelity of $1$ and an $H_{min}$ value of $-1$. Conversely, for a vanishing share set, $G_{M^c}$ contains all four single qubit Paulis, therefore leading to a maximum fidelity of $1/4$ and $H_{min}$ value of 1. As discussed previously, with two missing shares, $G_{M^c}$ contains half of the logical Paulis, and therefore the maximum fidelity equals $1/2$, leading to an $H_{min}$ value of $0$. By utilizing ML decoders along with the cleaning lemma for stabilizers, we skip the expensive optimization normally required to calculate $q_{corr}$ and instead obtain a direct expression for maximum fidelity in terms of $|G_{M^c}|$. Additionally, we show that the conditional min-entropy is inversely proportional to the number of operators (and therefore bits of information) that are cleaned off of the shares held by the adversary.

\end{example}

\section{Quantum Anonymous Secret Sharing Protocol}

\subsection{QASS Protocol}
\begin{protocol}
    \caption{\textit{QASS}}
    Prerequisite: An $\codeParam{n,2,d}$ PI code has encoded the secret $\ket\psi$ into $n$ shares $\ket{\phi_i}$ and has been distributed to the $n$ shareholders.
    \begin{algorithmic}[1]
        \State For each of the $k$ shareholders participating in recovery: \vspace{0.2em}
        \begin{subnumbers}
            \State Confirm only one player is currently sending their share using the anonymous collision detection protocol ($\lceil \log n\rceil + 1$ $n$-qubit GHZ states).\vspace{0.2em}
            \State Generate 3 copies of the $n$-qubit GHZ state among the $n$ players.\vspace{0.2em}
            \State Perform \textit{ANONQ($\ket{\phi_i}$)} using the shared states, sending to known receiver Bob.\vspace{0.2em}
        \end{subnumbers}
        
        \State Bob decodes with collected shares $\ket{\phi_1},\ket{\phi_2},...\ket{\phi_k}$ to retrieve secret $\ket{\psi}$.
    \end{algorithmic}
    \label{prot:QASS}
\end{protocol}

We assume that both the encoding of the secret into $n$ shares and the distribution has already been done. The encoding happens prior to distribution and therefore doesn’t interfere with any anonymity constraints, and the distribution of shares does not need to necessarily be anonymous for our protocol’s usage. Let $k$ be the number of shares required in order to decode the secret. We maintain that anonymity is preserved throughout the decoding process. Each of the $k$ share-holders use the \textit{ANONQ} (Protocol \ref{prot:ANONQ}) to transmit their shares to a designated decoder Bob. We provide a visualization of the \textit{ANONQ} protocol in Fig. \ref{fig:anonq}.  Note that, as per our \textit{AE} protocol, we maintain only sender-anonymity, not receiver-anonymity as the senders retain knowledge of Bob.

\begin{figure}[!t]
    \centering
    \includegraphics[width=.8\columnwidth]{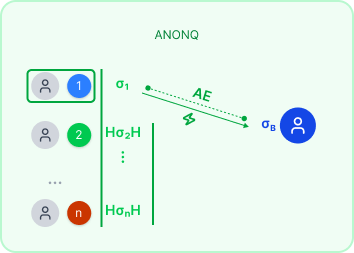}
    \setlength{\belowcaptionskip}{-1em}
    \caption{Overview of \textit{ANONQ($\ket{\phi}$)} (Protocol \ref{prot:ANONQ}) with Alice encircled in green and Bob in blue. Each participant holds one qubit of the $n$-qubit GHZ state, denoted as $\sigma_i$, and all participants besides Alice and Bob apply an $H$ gate to their qubit, leaving only Alice and Bob with an entangled pair. The green arrow represents the teleportation step after the \textit{AE} protocol.}
    \label{fig:anonq}
    \setlength{\belowcaptionskip}{0em}
\end{figure}
Next, Bob decodes the secret using an error syndrome consisting purely of the number of missing shares. This uses the unique property of PI codes in order for Bob to remain ignorant of the share-holders’ identities while retrieving the secret. The secret decoding is also traceless, as once Bob has the shares there is no more communication required to obtain the secret. Additionally, the adversary's knowledge of Bob's randomness does not identify any of the senders due to the decoding being permutation-invariant and therefore ignorant of the senders' order or identities.

Because each individual step leaves the system in a state where the share-holders’ identities are unknown and none of the contents of each step reveals any information of the share-holders’ identities, this combined protocol maintains share-holder anonymity. This protocol also inherits the assumptions made by each step, notably we assume that all share-holders and Bob are honest, otherwise any one of the steps described above could be disrupted. This restriction, however, is only for correctness of the output of each protocol. Anonymity is preserved up to $n-2$ corrupted players and the protocol is traceless due to each sub-protocol being traceless. We summarize the protocol in Protocol \ref{prot:QASS} and Fig. \ref{fig:qass-protocol}.

\begin{figure*}[!t]
    \centering
    \includegraphics[width=.65\textwidth]{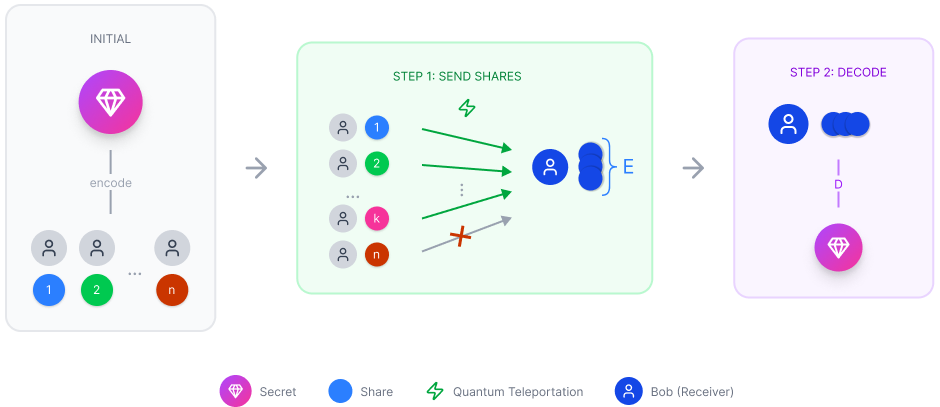}
    \caption{Overview of the \textit{QASS} protocol. We initially assume the secret has been encoded into $n$ shares. In Step 1, each green arrow denotes a share sent from each $k$ participating shareholders via the \textit{ANONQ} protocol. Once received by Bob, the identities of each share is entirely lost due to \textit{ANONQ}. Finally, Bob decodes the anonymous set of retrieved shares using the PI code's decoding circuit to obtain the secret.}
    \label{fig:qass-protocol}
\end{figure*}

\subsection{Hybrid \textit{QASS} Protocol}\label{sec:hyb-qass}
\begin{protocol}
    \caption{\textit{HQASS}}
    Prerequisite: A quantum secret $\ket\psi$ and two classical secrets $a$ and $b$ have been encoded using $\codeParam{n,2,d}$ PI codes into $n$ shares each ($\ket{\phi_i},\ket{a_i},\ket{b_i}$ respectively) and all shares have been distributed to the $n$ shareholders.
    \begin{algorithmic}[1]
        \State For each of the $k$ players participating in recovery:\vspace{.2em}
        \begin{subnumbers}
            \State Confirm only one player is currently sending their share using the anonymous collision detection protocol ($\lceil \log n\rceil + 1$ $n$-qubit GHZ states).\vspace{0.2em}
            \State Generate 9 copies of the $n$-qubit GHZ state among the $n$ players.\vspace{0.2em}
            \State Perform \textit{ANONQ($\ket{\phi_i}$)}, \textit{ANONQ($\ket{a_i}$)}, and \textit{ANONQ($\ket{b_i}$)} in sequence using the shared states, sending to known receiver Bob.\vspace{0.2em}
        \end{subnumbers}
        
        \State Bob decodes with  collected shares $\ket{a_1}...\ket{a_k}$ and $\ket{b_1}...\ket{b_k}$ to retrieve $a$ and $b$.\vspace{0.2em}
        \State Bob then decodes using $\ket{\phi_i}...\ket{\phi_k}$ to retrieve $X^aZ^b\ket{\psi}$ and applies the unitary transformations given by $a$ and $b$ to recover $\ket{\psi}$.
    \end{algorithmic}
    \label{prot:HQASS}
\end{protocol}
Protocol \ref{prot:QASS} can be easily modified to accommodate the hybrid scheme construction \cite{PhysRevA.64.042311,PhysRevA.71.012328} described in Section \ref{sec:sec-share}.
The dealer chooses two classical bits, $a$ and $b$, and applies a logical $X$ and $Z$ Pauli to the quantum secret depending on the value of the bits:
\begin{equation*}
    \rho_{RE\overline{E}}^{a,b} =(I_R\otimes \overline{X}^a\overline{Z}^b)\ket{\Phi_{RE}}\!\bra{\Phi_{RE}}(I_R\otimes \overline{Z}^b\overline{X}^a).
\end{equation*}
It is clear that, without knowledge of $a$ and $b$, this state reduces to the maximally mixed state, leaving the adversary with no distinctive information. Now we can ``lift'' the security of the quantum secret to that of classical secrets by encoding both $a$ and $b$ as classical secrets. 

Let $\ket{\overline 0}$ and $\ket{\overline 1}$ be the two encoded basis states corresponding to classical bit values $0$ and $1$ respectively. The dealer then encodes $a$ and $b$ as additional secrets using the states $\rho_{A\overline{A}}^a$ and $\rho_{B\overline{B}}^b$ equal to either $\ketbra{\overline 0}{\overline 0}$ or $\ketbra{\overline 1}{\overline 1}$ according to the values of $a$ and $b$. As before, we represent the erased subsystem using $\overline A$ and $\overline B$ for each of these classical secrets. After erasing $\overline E$, $\overline A$, and $\overline B$, the adversary is left with the following state:
\begin{equation*}
    \frac{1}{4}\left(\sum_{a,b\in\mathbb{F}_2}\Tr_{R\overline{E}}\left(\rho^{a,b}_{RE\overline{E}}\right)\otimes\Tr_{\overline{A}}(\rho_{A\overline{A}}^a)\otimes\Tr_{\overline{B}}(\rho_{B\overline{B}}^b)\right).
\end{equation*}
We adapt our \textit{QASS} protocol for the hybrid case in Protocol \ref{prot:HQASS}.

\section{Information Leakage Results for QSS Schemes}
\subsection{Experimental Setup}

We use Python's \texttt{cvxopt} package to solve for $q_{corr}$ using a semi-definite program. In particular, we gather results for 4, 7, 9, 11, and 13 qubit codes listed in Tables \ref{tab:4_codes} and \ref{tab:7-13_codes}.
Some of these codes have been labeled as ``shifted'' codes, these are smaller gnu codes that have been shifted using parameter $\Delta$ to a larger physical subspace as described by Ouyang \cite{ouyang2021permutation}. For each code, we gather a series of data-points, one per participating shareholder, to determine how much information leaked by the share set. We label the total number of participating shareholders as $n_p = |E|$. For the task of QSS, less information remaining means lower amounts of leakage, and therefore a more secure state. To gather each individual point of data, we input a maximally mixed single qubit $\ket{\psi_{R}} = \frac{1}{\sqrt{2}}(\ket0+\ket1)$ as the reference state and similarly input a maximally mixed single qubit into the encoding circuit to return $\ket{\psi_{E\bar{E}}} = \frac{1}{\sqrt{2}}(\ket{\bar0}+\ket{\bar1})$ where $\ket{\bar0}$ and $\ket{\bar1}$ represent the physical $0$ and $1$ basis states after encoding $\ket{0},\ket{1} \rightarrow \ket{\bar0},\ket{\bar1}$ respectively. We then entangle the two systems to obtain the overall input pure state
\[\ket{\Phi_{RE\bar{E}}} = \frac{1}{\sqrt2}\left( \ket0\!\ket{\bar0} + \ket1\!\ket{\bar1}\right).\]

We note that this is exactly the maximally mixed state required by the second Knill-Laflamme condition (Eq. \ref{eq:Knill-Laflamme-2}) except using the logical basis state for the reference register. This is equivalent to the explicit condition due to the fact that encoding the reference system and then performing no action on it means that the mapping between the logical and encoded state for the reference system is one-to-one. Therefore using the logical state as the reference state is equivalent to using the encoded state. Using an initial density matrix of $\rho_{RE\bar{E}} = \ketbra{\Phi_{RE\bar{E}}}{\Phi_{RE\bar{E}}}$, where $R$ is the reference system, $E$ is the experimental system's qubits that remain after erasure, and $\bar{E}$ is the set of qubits that get erased. We then simulate $|\bar{E}|$ erasures by tracing out $\bar{E}$, giving $\rho_{RE} = \text{Tr}_{\bar{E}}(\rho_{RE\bar{E}})$.

According to the definition of conditional min-entropy and the second Knill-Laflamme condition presented in Section \ref{subsec:cond-min-entropy} and  Definition \ref{def:Knill-Laflamme} respectively, we should be calculating $H_{min}(\bar{R}|E\bar{E})$, where $\bar{R}$ represents the \textit{encoded} reference state. Additionally, $E\bar{E}$ does not appear to lose dimension, and therefore erasures should be modeled as depolarizing qubits from the experimental state rather than tracing them out. However, because the encoded basis states are isomorphic to the logical basis states, we can substitute these for the logical basis states instead: 
\[\ket{\Phi_{RE}} = \frac{1}{\sqrt{d}}\sum_{x\in\{0,1\}^d}\ket{x}_R\ket{x}_E,\]
which in our case of each subsystem consisting of 1 logical qubit simplifies to:
\[\ket{\Phi_{RE}} = \frac{1}{\sqrt2}(\ket{00} + \ket{11}).\]

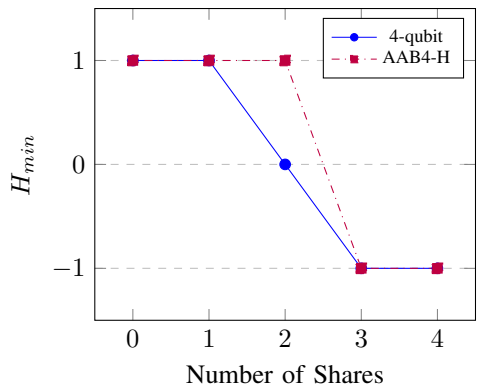
\begin{figure}[t]
\centering
\begin{tikzpicture}[trim axis left]
\begin{axis}[
    width=0.75\columnwidth,     xlabel={Number of Shares},
    ylabel={$H_{min}$},
    xmin=-0.5, xmax=4.5,
    ymin=-1.5, ymax=1.5,
    ymajorgrids=true,
    grid style=dashed,
    legend pos=north east,
    legend style={at={(0.97,0.97)}, anchor=north east, font=\scriptsize, row sep=-2pt, legend image post style={mark size=1.4pt, line width=0.3pt}}
]

\addplot[color=blue,mark=*] coordinates {
    (4,-1.0) (3,-1.0) (2,0.0) (1,1.0) (0,1.0)
};
\addlegendentry{4-qubit}
\addplot[dashdotted,color=purple,mark=square*] coordinates {
    (4,-1.0) (3,-1.0) (2,1.0) (1,1.0) (0,1.0)
};
\addlegendentry{AAB4-H}

\end{axis}

\end{tikzpicture}
\setlength{\belowcaptionskip}{-0.8em}
\caption{$H_{min}$ for QASS schemes using the $\codeParam{4,2,2}$ Aydin et al. PI code~\cite{Aydin_2024} and the $\codeParam{4,2,2}$ Hagiwara and Nakayama PI code~\cite{hagiwara2020fourqubitscodequantumdeletion}, the ramp QSS scheme using the $\stabParam{4,1,2}$ Leung et al. stabilizer code~\cite{Leung_1997}, and the HQASS scheme using Aydin et al. PI code~\cite{Aydin_2024} for both quantum and classical secrets. All three 4-qubit QSS schemes have identical $H_{min}$ values, whereas the HQASS scheme has a threshold of $k=3$ with no intermediate share sets.} 
\label{fig:4-qub}
\setlength{\belowcaptionskip}{0em}
\end{figure}
Note that this is only the objective state, and that $\rho_{RE}$ is still the bipartite state between the entangled logical reference system and encoded experimental system post-erasure(s). Therefore, the quantum channel $\mathcal{E}$ that the SDP will be optimizing over can be seen as the optimal recovery channel from some set of shares post-erasure to a single qubit state representing the adversary's attempt at the decoded secret. If the adversary has a fully authorized set of qubits, then this will be the exact secret, representing full leakage of information.

We can now use $\rho_{RE}$ as described in the semi-definite primal program for calculating $q_{corr}$ and thereby retrieve both $H_{min}(R|E)$ and the maximum fidelity $F$ (by dividing $q_{corr}$ by $d=\text{dim}(R)= 2$). We consider the maximum fidelity as a more stable measure of similarity as it remains a constant value between $0$ and $1$ denoting the similarity between the maximally recovered state of $X$ after solving the SDP and the maximally entangled state $\ket{\Phi_{RE}}$ necessary for total recovery. In particular, we expect the values of both $H_{min}$ and maximum fidelity to be the same between all three 4-qubit codes after a single erasure and after three erasures. This is because after a single erasure, all of the 4-qubit codes remain fully recoverable and afterthreeerasures all of the 4-qubit code resolve to a single maximally mixed qubit.

We also calculate the $H_{min}$ values for a HQASS scheme using the 4-qubit Aydin et al. code and the setup described in Section \ref{sec:hyb-qass}. We chose this code because when using the computational basis, it acts as a perfect CSS scheme, however we add that the 4-qubit Hagiwara and Nakayama code is also a perfect CSS scheme when encoded in the logical $Y$ basis ($\ket{\overline{\pm i}} = \frac{1}{\sqrt 2}(\ket{\overline 0} \pm i\ket{\overline 1})$). In this setting, we assume each shareholder either participates fully or not at all, meaning $n_p = |E| = |A| = |B|$, where $A$ and $B$ represent the present shares for secrets $a$ and $b$ respectively.

\subsection{Results}

Tables \ref{tab:4_codes} and \ref{tab:7-13_codes} shows how information remaining on the secret decreases nonlinearly based on the number of remaining qubits. Notably, all three of the 4-qubit codes of distance 2 measured the same $H_{min}$ and $F$ values after each number of qubits erased. As such, we combine the results for all 4-qubit codes into a single figure (Fig. \ref{fig:4-qub}). As shown in Fig. \ref{fig:combined-qubits}, the amount of information gained per share is not necessarily constant, with Fig. \ref{fig:13-qub} showing certain shares gathering no new information (denoted by the several plateaus of $H_{min}$).
\begin{table}[ht]
    \centering
    \caption{Summary of $H_{\text{min}}$ and maximum fidelity for 4-qubit codes.}
    \label{tab:4_codes}
    \vspace{1em}
    \resizebox{\columnwidth}{!}{    \begin{tabular}{@{}c | c c | c c | c c  | c c@{}}
        \toprule
        \multirow{2}{*}{$n_p$} & \multicolumn{2}{c|}{\textbf{AAB4~\cite{Aydin_2024}}} & \multicolumn{2}{c|}{\thead{\textbf{HN4~\cite{hagiwara2020fourqubitscodequantumdeletion}}}} & \multicolumn{2}{c|}{\textbf{LNCY4 \cite{Leung_1997}}} & \multicolumn{2}{c}{\thead{\textbf{AAB4-H~\cite{Aydin_2024}}}} \\
        \cmidrule{2-9}
        & $H_{\text{min}}$ & $F$ & $H_{\text{min}}$ & $F$ & $H_{\text{min}}$ & $F$ & $H_{\text{min}}$ & $F$ \\
        \midrule
        4 & -1.0 & 1.0 & -1.0  & 1.0 & -1.0 & 1.0 & -1.0 & 1.0 \\
        3 & -1.0  & 1.0 & -1.0 & 1.0 & -1.0 & 1.0 & -1.0 & 1.0 \\
        2 & 0.0  & 0.5 & 0.0 & 0.5 & 0.0 & 0.5 & 1.0 & 0.25 \\
        1 & 1.0  & 0.25 & 1.0 & 0.25 & 1.0 & 0.25 & 1.0 & 0.25 \\
        0 & 1.0  & 0.25 & 1.0 & 0.25 & 1.0 & 0.25 & 1.0 & 0.25 \\
        \bottomrule
    \end{tabular}
    }
    
\end{table}

 \begin{table*}[ht]      
   \centering         
   \caption{Summary of $H_{\text{min}}$ and maximum fidelity for 7, 9, 11, and 13-qubit PI codes.}             
   \label{tab:7-13_codes}          
   \setlength{\tabcolsep}{3pt}     
   \begin{tabular}{@{}c | cc | cc | cc | cc | cc | cc | cc | cc @{}}   
   \toprule                                                                                    
   \multirow{2}{*}{\shortstack{Number of\\Shares}} &                                  
                \multicolumn{2}{c|}{\thead{\textbf{PR7~\cite{pollatsek2004permutationally}}}} &  
  \multicolumn{2}{c|}{\thead{\textbf{AAB7~\cite{Aydin_2024}}}} &          
  \multicolumn{2}{c|}{\textbf{R9~\cite{Ruskai_2000}}} & 
  \multicolumn{2}{c|}{{\thead{\textbf{KT11~\cite{Kubischta_2023}}}}} &  
  \multicolumn{2}{c|}{\thead{\textbf{O11 \cite{ouyang2021permutation}}}} &                
  \multicolumn{2}{c|}{\thead{\textbf{AAB11 \cite{Aydin_2024}}}} &
  \multicolumn{2}{c|}{\thead{\textbf{KT13 \cite{Kubischta_2023}}}} &
  \multicolumn{2}{c}{\thead{\textbf{O13 \cite{ouyang2021permutation}}}} \\
   \cmidrule(lr){2-3} \cmidrule(lr){4-5} \cmidrule(lr){6-7} \cmidrule(lr){8-9}      
  \cmidrule(lr){10-11} \cmidrule(lr){12-13} \cmidrule(lr){14-15} \cmidrule(l){16-17}                                          
   & $H_{\text{min}}$ & $F$ & $H_{\text{min}}$ & $F$ & $H_{\text{min}}$ & $F$ &      
  $H_{\text{min}}$ & $F$ & $H_{\text{min}}$ & $F$ & $H_{\text{min}}$ & $F$ & $H_{\text{min}}$ & $F$ & $H_{\text{min}}$ & $F$ \\        
   \midrule                                                                                    
   13 & -- & -- & -- & -- & -- & -- & -- & -- & -- & -- & -- & -- & -1.0 & 1.0 & -1.0 & 1.0 \\              
   12 & -- & -- & -- & -- & -- & -- & -- & -- & -- & -- & -- & -- & -1.0 & 1.0 & -1.0 & 1.0 \\              
   11 & -- & -- & -- & -- & -- & -- & -1.0 & 1.0 & -1.0 & 1.0 & -1.0 & 1.0 & -1.0 & 1.0 & -1.0 & 1.0 \\            
   10 & -- & -- & -- & -- & -- & -- & -1.0 & 1.0 & -1.0 & 1.0 & -1.0 & 1.0 & -0.96 & 0.97 & -0.92 & 0.95 \\         
   9 & -- & -- & -- & -- & -1.0 & 1.0 & -1.0 & 1.0 & -1.0 & 1.0 & -1.0 & 1.0 & -0.85 & 0.9 & -0.78 & 0.86 \\     
   8 & -- & -- & -- & -- & -1.0 & 1.0 & -0.93 & 0.96 & -0.87 & 0.91 & -0.82 & 0.88 & -0.52 & 0.72 & -0.60 & 0.76 \\            
   7 & -1.0 & 1.0 & -1.0 & 1.0 & -1.0 & 1.0 & -0.68 & 0.8 & -0.67 & 0.79 & -0.64 & 0.78 & -0.26 & 0.6 & -0.40 & 0.66 \\                                                                                                   
   6 & -1.0 & 1.0 & -1.0 & 1.0 & -0.77 & 0.85 & -0.23 & 0.59 & -0.4 & 0.66 & -0.38 & 0.65 & 0.04 & 0.48 & -0.21 & 0.58 \\                                                                                                   
   5 & -1.0 & 1.0 & -1.0 & 1.0 & -0.33 & 0.63 & 0.09 & 0.47 & -0.14 & 0.55 & -0.17 & 0.56 & 0.04 & 0.48 & 0.00 & 0.50 \\                                                                                                   
   4 & -0.32 & 0.62 & -0.32 & 0.62 & 0.09 & 0.47 & 0.39 & 0.38 & 0.19 & 0.44 & 0.15 & 0.45 & 0.46 & 0.36 & 0.31 & 0.40 \\                                                                                                   
   3 & 0.32 & 0.4 & 0.32 & 0.4 & 0.4 & 0.38 & 0.44 & 0.37 & 0.53 & 0.35 & 0.45 & 0.37 & 0.46 & 0.36 & 0.61 & 0.33 \\             
   2 & 1.0 & 0.25 & 1.0 & 0.25 & 1.0 & 0.25 & 1.0 & 0.25 & 1.0 & 0.25 & 1.0 & 0.25 & 1.0 & 0.25 & 1.0 & 0.25 \\            
   1 & 1.0 & 0.25 & 1.0 & 0.25 & 1.0 & 0.25 & 1.0 & 0.25 & 1.0 & 0.25 & 1.0 & 0.25 & 1.0 & 0.25 & 1.0 & 0.25 \\            
   0 & 1.0 & 0.25 & 1.0 & 0.25 & 1.0 & 0.25 & 1.0 & 0.25 & 1.0 & 0.25 & 1.0 & 0.25 & 1.0 & 0.25 & 1.0 & 0.25 \\ \bottomrule
   \end{tabular}
\end{table*}

\begin{figure*}[!t]

 \centering

\begin{subfigure}{0.22\textwidth}
 
 \centering
 \begin{tikzpicture}[trim axis right, trim axis left, baseline]
 \begin{axis}[
 width=\linewidth, scale only axis,
 ylabel={$H_{\text{min}}$}, xlabel={Number of Shares},
 xmin=-0.5, xmax=7.5, ymin=-1.5, ymax=1.5, ymajorgrids=true, grid style=dashed,
 label style={font=\footnotesize},
 tick label style={font=\footnotesize},
 every axis plot/.append style={line width=0.4pt, mark size=1pt},
 major tick length=2pt,
     ]
 \addplot[color=blue,mark=*] coordinates {(7,-1.0) (6,-1.0) (5,-1.0) (4,-0.32) (3,0.32) (2,1.0) (1,1.0) (0,1.0)};
  \end{axis}
        \end{tikzpicture}
 \caption{ 7-qubit PI code}
 \label{fig:7-qub}
\end{subfigure}
\begin{subfigure}{0.22\textwidth}
 
 \centering
 \begin{tikzpicture}[trim axis right, trim axis left, baseline]
 \begin{axis}[
 width=\linewidth, scale only axis,
 yticklabels={}, xlabel={Number of Shares}, 
 xmin=-0.5, xmax=9.5, ymin=-1.5, ymax=1.5, ymajorgrids=true, grid style=dashed,
 label style={font=\footnotesize},
 tick label style={font=\footnotesize},
 every axis plot/.append style={line width=0.4pt, mark size=1pt},
 major tick length=2pt,
 ]
 \addplot[color=blue,mark=*] coordinates {(9,-1.0) (8,-1.0) (7,-1.0) (6,-0.77) (5,-0.33) (4,0.09) (3,0.4) (2,1.0) (1,1.0) (0,1.0)};
 \end{axis}
        \end{tikzpicture}
 \caption{ 9-qubit PI code}
 \label{fig:9-qub}
\end{subfigure}
\begin{subfigure}{0.22\textwidth}
 
 \centering
 \begin{tikzpicture}[trim axis right, trim axis left, baseline]
 \begin{axis}[
 width=\linewidth, scale only axis,
 xlabel={Number of Shares},
 xmin=-0.5, xmax=11.5, xtick distance=2, ymin=-1.5, ymax=1.5, yticklabels={}, ymajorgrids=true, grid style=dashed,
 label style={font=\footnotesize},
 tick label style={font=\footnotesize},
 every axis plot/.append style={line width=0.4pt, mark size=1pt},
 major tick length=2pt,
 legend style={at={(0.97,0.97)}, anchor=north east, font=\scriptsize, inner sep=0.5pt, row sep=-3pt, legend image post style={mark size=0.8pt, line width=0.3pt}},
 ]
 \addplot[color=blue,mark=*] coordinates {(11,-1.0) (10,-1.0) (9,-1.0) (8,-0.93) (7,-0.68) (6,-0.23) (5,0.09) (4,0.39) (3,0.44) (2,1.0) (1,1.0) (0,1.0)};
 \addlegendentry{KT11}
 \addplot[color=orange,mark=square*, dashed] coordinates {(11,-1.0) (10,-1.0) (9,-1.0) (8,-0.87) (7,-0.67) (6,-0.4) (5,-0.14) (4,0.19) (3,0.53) (2,1.0) (1,1.0) (0,1.0)};
 \addlegendentry{O11}
 \addplot[color=teal,mark=diamond*, dashdotted] coordinates {(11,-1.0) (10,-1.0) (9,-1.0) (8,-0.82) (7,-0.64) (6,-0.38) (5,-0.17) (4,0.15) (3,0.45) (2,1.0) (1,1.0) (0,1.0)};
 \addlegendentry{AAB11}
 \end{axis}
        \end{tikzpicture}
 \caption{ 11-qubit PI codes}
 \label{fig:11-qub}
\end{subfigure}
\begin{subfigure}{0.22\textwidth}
 \centering
 \begin{tikzpicture}[trim axis right, trim axis left, baseline]
 \begin{axis}[
 width=\linewidth, scale only axis,
 yticklabels={}, xlabel={Number of Shares}, 
 xmin=-0.5, xmax=13.5, ymin=-1.5, ymax=1.5, ymajorgrids=true, grid style=dashed, xtick distance=2,
 label style={font=\footnotesize},
 tick label style={font=\footnotesize},
 every axis plot/.append style={line width=0.4pt, mark size=1pt},
 major tick length=2pt,
 legend style={at={(0.97,0.97)}, anchor=north east, font=\scriptsize, inner sep=0.5pt, row sep=-3pt, legend image post style={mark size=0.8pt, line width=0.3pt}}
 ]
  \addplot[color=blue, mark=*] coordinates {(13,-1.0) (12,-1.0) (11,-1.0) (10,-0.96) (9,-0.85) (8,-0.52) (7,-0.26) (6,0.04) (5,0.04) (4,0.46) (3,0.46) (2,1.0) (1,1.0) (0,1.0)};
 
  \addplot[color=orange, mark=square*, dashed] coordinates {(13,-1.0) (12,-1.0) (11,-1.0) (10,-0.92) (9,-0.78) (8,-0.60) (7,-0.40) (6,-0.21) (5,0.0) (4,0.31) (3,0.61) (2,1.0) (1,1.0) (0,1.0)};

 \legend{KT13, O11}
 \end{axis}
        \end{tikzpicture}
 \caption{ 13-qubit PI codes }
 \label{fig:13-qub}
\end{subfigure}

\vspace{.7em} 

    \setlength{\belowcaptionskip}{-1em}

 \caption{ Comparison of $H_{\text{min}}$ across various PI codes of distance 3. (a) The $\codeParam{7,2,3}$ codes of Pollatsek and Ruskai~\cite{pollatsek2004permutationally, Kubischta_2023} and Aydin et al.~\cite{Aydin_2024}, which have the same $H_{min}$ values. (b) The $\codeParam{9,2,3}$ code of Ruskai~\cite{Ruskai_2000} (the gnu $\left(3,3,1\right)$-PI code~\cite{PhysRevA.90.062317}). (c) Three $\codeParam{11,2,3}$ codes including the one given by Kubischta and Teixeira~\cite{Kubischta_2023}, the shifted gnu $\left(3,3,1\right)$-PI code of Ouyang~\cite{ouyang2021permutation}, and the $\mathcal{Q}_{4,1,2,-1}$ code of Aydin et al.~\cite{Aydin_2024}. (d) Two $\codeParam{13,2,3}$ codes including one given by Kubischta and Teixeira \cite{Kubischta_2023} and the shifted gnu (3, 3, 1)-PI code of Ouyang \cite{ouyang2021permutation}.}
 \label{fig:combined-qubits}

 \setlength{\belowcaptionskip}{0em}
\end{figure*}

\section{Conclusion}

By mitigating single points of failure around the control of confidential information, quantum secret sharing remains an important primitive in quantum cryptography.
In this paper, we extend the theory of QSS by providing an anonymous protocol that, when paired with a PI code, allows for total sender anonymity within the QSS scheme.
By obscuring the senders' identities from both adversaries and other participants, we not only retain anonymity in the case of an unbounded adversary, but also support cases in which the sender(s) wish to obfuscate their participation in general such as in anonymous voting \cite{hao2010anonymous, vaccaro2007quantum}.
The motivation for anonymity in both scenarios is simple: a participant should be able to participate in a protocol without fear of coercion or external ramifications, even from other participants or the recoverer.
We also put forward conditional min-entropy as a measure for quantifying the leakage of information in ramp QSS schemes, including those derived from PI codes.

There are many interesting future directions regarding improvements to different parts of our protocol.
The first would be replacing the quantum anonymous transmissions protocols we use from~\cite{Christandl_2005} with other more recent protocols. One limitation of the protocols we use is their dependence on an $n$-qubit GHZ state, with some newer works instead using W states \cite{Thalacker_2021,lipinska2018anonymous}, rendering the protocols more resistant to noise. Brassard et al. \cite{brassard2007anonymous} introduced a verification stage for the GHZ state, removing the assumption of a prior shared $n$-qubit GHZ state, while also ensuring that the state is only destroyed given a malicious participant with exponentially small probability. More recently, Unnikrishnan et al. \cite{PhysRevLett.122.240501} improved upon this by providing anonymity even assuming a malicious source and an approximate anonymous teleportation protocol that can be run without cumbersome size-$n$ quantum circuits. 
Another possible improvement is to identify PI codes that exhibit less dramatic leakage for small intermediate sets, leading to a more gradual ramping structure in the graphs of $H_{min}$.
In the hybrid \textit{QASS} schemes, we elected to use the same PI code to encode both the quantum and classical secrets; however, this is not necessary, and it motivates a formal investigation into PI codes designed specifically for the transmission of classical information.

Another direction would be formalizing the connection between the ramp QSS schemes studied here and approximate QSS schemes~\cite{crepeau2005approximateQEC,PhysRevA.108.012425,elimelech2026asymptoticallygoodbosonicfock}, both of which allow for information leakage.
In the approximate QSS, a small amount of information leakage is allowed in unauthorized sets, and a small error on the reconstructed secret is allowed in authorized sets.
Similar to our usage of conditional min-entropy to quantify information leakage, approximate schemes make use of the maximum fidelity to quantify leakage.

Finally, while the conditional min-entropy provides a sound measure of information leakage, it is also generally expensive to compute, with a time complexity proportional to the dimension of $\rho_{RE}$.
Since this matrix scales exponentially with the number of qubits, calculating the conditional min-entropy can be impractical even for relatively small QEC codes.
As such this motivates the developments of techniques for speeding up the calculations of $H_{min}$ or bounding it, both for general codes as well as for special classes such as the PI codes.
One idea might be to use techniques from the theory of quantum weight enumerators in QEC to reduce the size of the SDP instances~\cite{10609451,munne2025sdpboundsquantumcodes}.

\section*{Acknowledgment}
The authors would like to thank Eric Chitambar for fruitful discussions.

\bibliographystyle{IEEEtran}
\bibliography{bib-template-abbv}

\end{document}